\newcommand{\etal}{{et~al.}}
\newcommand{\ray}[1]{\overrightarrow{#1}}
\title{A Short Proof of the Toughness of Delaunay Triangulations
	%\thanks{This research is supported by NSERC.}
}
\author{Ahmad Biniaz\thanks{Part of this work has been done while the author was an NSERC postdoctoral fellow at University of Waterloo.}
	%\and Michiel Smid\thanks{School of Computer Science, Carleton University, michiel@scs.carleton.ca}
}
\affil{School of Computer Science\\University of Windsor\\\texttt{ahmad.biniaz@gmail.com}}
\date{}
\newtheorem{conjecture}{Conjecture}
\newtheorem{theorem}{Theorem}
\newtheorem*{problem*}{Problem}
\newtheorem*{claim*}{Claim}
\newtheorem*{invariant*}{Invariant}
\begin{document}
	\maketitle
	\begin{abstract}
	We present a self-contained short proof of the seminal result of Dillencourt (SoCG 1987 and DCG 1990) that Delaunay triangulations, of planar point sets in general position, are 1-tough. An important implication of this result is that Delaunay triangulations have perfect matchings. Another implication of our result is a proof of the conjecture of Aichholzer~\etal~(2010) that at least $n$ points are required to block any $n$-vertex Delaunay triangulation.
	\end{abstract}
\section{Introduction}

Let $P$ be a set of points in the plane that is in general position, i.e., no three points on a line and no four points on a circle. The {\em Delaunay triangulation} of $P$ is an embedded planar graph with vertex set $P$ that has a straight-line edge between two points $p,q\in P$ if and only if there exists a closed disk that has only $p$ and $q$ on its boundary and does not contain any other point of $P$.
A graph is $1$-{\em tough} if for any $k$, the removal of $k$ vertices splits the graph
into at most $k$ connected components. In 1987, Dillencourt proved the toughness of Delaunay triangulations. 

\begin{theorem}[Dillencourt \cite{Dillencourt1990}]
	\label{Dillencourt-thr}
	Let $T$ be the Delaunay triangulation of a set of points in the plane in general position, and let $S\subseteq V(T)$. Then $T\setminus S$ has at most $|S|$ components. %In other words, $T$ is 1-tough.  
\end{theorem}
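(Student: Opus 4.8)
The plan is to fix the components $C_1,\dots,C_m$ of $T\setminus S$ and prove $m\le|S|$ by combining a planar counting skeleton with the empty-circle property of Delaunay triangulations. First I would reduce, by induction on $|S|$, to the case where every vertex of $S$ is adjacent in $T$ to at least two distinct components: if some $s\in S$ touches at most one component, then deleting $s$ from $S$ re-attaches it to that single component (or leaves it as a new singleton), so $T\setminus(S\setminus\{s\})$ has at least $m$ components while its separator is smaller, and the inductive hypothesis gives $m\le|S|-1<|S|$. The base cases are handled by the fact that a Delaunay triangulation on at least three points (general position rules out collinearity) is $2$-connected, so a single deleted vertex cannot disconnect it. Throughout I would use that $T$ triangulates the convex hull of the point set, so that every bounded face of $T$ is a triangle.

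Next I would set up the combinatorial skeleton. Contracting each component $C_i$ to a single vertex yields an auxiliary planar graph on $|S|+m$ vertices in which the $m$ contracted vertices form an independent set, since no edge of $T$ joins two different components. The triangulation structure gives a useful local fact: reading the neighbors of a vertex $s\in S$ in rotational order, two consecutive neighbors that both avoid $S$ must lie in the same component, for otherwise the triangular face they span with $s$ would be an edge between two distinct components. Consequently the distinct components around $s$ are angularly separated by the $S$-neighbors of $s$. Combining this with the observation that an interior component is enclosed by a cycle of at least three vertices of $S$, Euler's formula for bipartite planar graphs yields a bound of the form $m\le 2|S|-O(1)$. \emph{This is exactly a factor of two away from what we want, and no purely combinatorial argument can do better, since general planar triangulations need not be $1$-tough; the Delaunay property must enter here.}

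The geometric heart of the proof, and the step I expect to be the main obstacle, is closing this factor-of-two gap using the empty-circle property. My plan is to define an injection $\phi$ from the set of components into $S$ by an extremal rule --- for instance, assigning to $C_i$ the separator endpoint $q_i$ of a shortest edge of $T$ from $C_i$ to $S$, together with the empty disk witnessing that this edge is Delaunay --- and then to prove injectivity by contradiction. If two components $C_i$ and $C_j$ were assigned the same vertex $q$, I would exploit the two empty disks certifying the edges $p_iq$ and $p_jq$, where $p_i\in C_i$ and $p_j\in C_j$: the angular separation from the previous step places $p_i$ and $p_j$ in distinct sectors around $q$, and the extremality of the chosen edges together with emptiness should force one disk either to contain the other contact point or to certify a Delaunay edge directly between $C_i$ and $C_j$, contradicting that they are separated by $S$. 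The delicate points I anticipate are calibrating the extremal rule so that this disk interaction is genuinely forced, ruling out degenerate angular configurations, and treating the components and separator vertices incident to the unbounded face, where the enclosing cycle degenerates into a path and the witnessing disks may be unbounded. Establishing this injectivity cleanly is where the real work lies; once $\phi$ is shown to be injective, $m\le|S|$ is immediate.
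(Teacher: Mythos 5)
Your reduction step and the combinatorial skeleton are sound, but the proof has a genuine gap exactly where you predict ``the real work lies'': the injection $\phi$ is never constructed, and the extremal rule you offer as the candidate is in fact not injective. Take $a=(0,0)$, $b=(0.9,1.5)$, $c=(2,0)$, $d=(0.9,-1.5)$. These four points are in convex and general position, and since $\angle abc+\angle adc\approx 134^\circ<180^\circ$, the Delaunay diagonal is $ac$; thus $T$ has edges $ab,bc,cd,da,ac$. Take $S=\{a,c\}$, so the components of $T\setminus S$ are the singletons $\{b\}$ and $\{d\}$. Both $b$ and $d$ are strictly closer to $a$ than to $c$, so the shortest edge from each component to $S$ ends at $a$, and your rule assigns both components to the same separator vertex. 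Note that here $m=|S|=2$, so the theorem holds and an injection exists --- just not the one you defined; consequently no contradiction can be extracted from the two witnessing disks in this configuration, and any argument purporting to produce one would be unsound. This also shows that ``calibrating the extremal rule'' is not a routine repair: by Hall's theorem, the existence of an injection from components into $S$ (for all instances) is essentially equivalent to the theorem itself, so a rule-plus-injectivity proof must carry the entire geometric weight of the statement, and your proposal leaves precisely that part as a hope (``should force'', ``I expect''). What the proposal actually establishes is only the combinatorial bound $m\le 2|S|-O(1)$, which, as you note yourself, is not the theorem.

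For contrast, the paper closes this factor-of-two gap by a different device that avoids injections altogether: pick one representative vertex per component to form a set $C$, and build a \emph{new} Delaunay triangulation $T'$ on the point set $S\cup C$ alone. A path-in-disk lemma (Theorem~\ref{Delaunay-thr}: any closed disk having only $p$ and $q$ of $V(T)$ on its boundary contains a path of $T$ between $p$ and $q$) shows that $C$ is an independent set of $T'$: an edge of $T'$ between two representatives would be witnessed by an empty disk avoiding $S$, hence would yield a path of $T\setminus S$ joining two distinct components, a contradiction. Then the independent-set bound of Theorem~\ref{independent-set-thr} (proved by angle counting, which is where the Delaunay condition~\eqref{eq0} enters) is applied to the \emph{small} triangulation $T'$: $|C|\le\lfloor|T'|/2\rfloor=\lfloor(|S|+|C|)/2\rfloor$, i.e.\ $|C|\le|S|$. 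The factor of two does no damage there because every component has already been collapsed to a single vertex before any counting happens; that re-triangulation step is the idea your proposal is missing.
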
 

Dillencourt's proof of Theorem~\ref{Dillencourt-thr} is nontrivial and employs a large set of combinatorial and structural properties of (Delaunay) triangulations. Using the same proof idea, he showed that if $T$ is a Delaunay triangulation of an arbitrary point set in the plane (not necessarily in general position) then $T\setminus S$ has at most $|S|+1$ components. Combining this with Tutte's classical theorem that characterizes graphs with perfect matchings \cite{Tutte1947}, implies the following well-known result.

\begin{theorem}[Dillencourt \cite{Dillencourt1990}]
	\label{matching-thr}
	Every Delaunay triangulation has a perfect matching.
\end{theorem}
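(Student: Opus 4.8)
The plan is to deduce the existence of a perfect matching from Tutte's theorem \cite{Tutte1947}, which states that a graph $G$ has a perfect matching if and only if for every vertex subset $S\subseteq V(G)$ the number $o(G\setminus S)$ of connected components of $G\setminus S$ having an odd number of vertices satisfies $o(G\setminus S)\le |S|$. Thus it suffices to verify Tutte's condition for every $S\subseteq V(T)$, and the toughness estimates recalled above do essentially all of the work. Throughout I assume $n=|V(T)|$ is even, as is necessary for any graph to admit a perfect matching.

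For a point set in general position, Theorem~\ref{Dillencourt-thr} directly gives that $T\setminus S$ has at most $|S|$ connected components in total, so in particular $o(T\setminus S)\le |S|$ and Tutte's condition holds at once. The only case that needs extra care is when the point set is not in general position, where the available estimate is the weaker bound of at most $|S|+1$ components mentioned in the excerpt.

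To promote the weaker bound to what Tutte requires, I would invoke a parity argument. Since each odd component contributes an odd number of vertices and each even component an even number, counting the $n-|S|$ vertices of $T\setminus S$ modulo $2$ yields $o(T\setminus S)\equiv n-|S|\equiv |S|\pmod 2$, using that $n$ is even. Combined with $o(T\setminus S)\le |S|+1$, the quantity $o(T\setminus S)$ cannot equal $|S|+1$, because $|S|+1$ has the opposite parity to $|S|$; hence $o(T\setminus S)\le |S|$, and Tutte's condition holds in this case as well.

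The step I expect to be the genuine crux is not any of the deductions above, which are short once the toughness bounds are in hand, but rather the toughness statement itself (Theorem~\ref{Dillencourt-thr} together with its $|S|+1$ variant for arbitrary point sets), which is the substantive combinatorial-geometric input. Granting those facts, the matching theorem follows almost immediately; the only subtlety worth flagging is the parity observation that sharpens the $|S|+1$ component bound to the exact $|S|$ bound demanded by Tutte's characterization.
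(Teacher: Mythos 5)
Your proposal is correct and takes essentially the same route as the paper, which likewise obtains Theorem~\ref{matching-thr} by combining Tutte's theorem with Dillencourt's bound of at most $|S|+1$ components for Delaunay triangulations of arbitrary (not necessarily general-position) point sets. The parity argument you spell out to sharpen $|S|+1$ to the $|S|$ demanded by Tutte's condition is exactly the step the paper leaves implicit in the phrase ``combining this with Tutte's classical theorem.''
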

In this note we present a self-contained short proof of Theorem~\ref{Dillencourt-thr}. To that end, we first present an upper bound on the maximum size of an independent set of $T$. To facilitate comparisons we use the same definitions and notations as in \cite{Dillencourt1990}. The number of elements of a set $S$ is denoted by $|S|$. For a graph $G$, the vertex set of $G$ is denoted by $V(G)$, and $|G|=|V(G)|$. 

Every interior face of $T$ is
a triangle, and the boundary of $T$ is a convex polygon; see Figure~\ref{DT-fig}(a).
An edge is called a {\em boundary edge} if it is on the boundary of $T$, and is called an {\em interior edge} otherwise.
For any interior edge $(p,q)\in T$ between two faces $pqr$ and $pqs$ it holds that 
\begin{equation}
\label{eq0}
\angle prq + \angle psq < 180.
\end{equation}

\begin{figure}[htb]
	\centering
	\setlength{\tabcolsep}{0in}
	$\begin{tabular}{ccc}
	\multicolumn{1}{m{.24\columnwidth}}{\centering\includegraphics[width=.2\columnwidth]{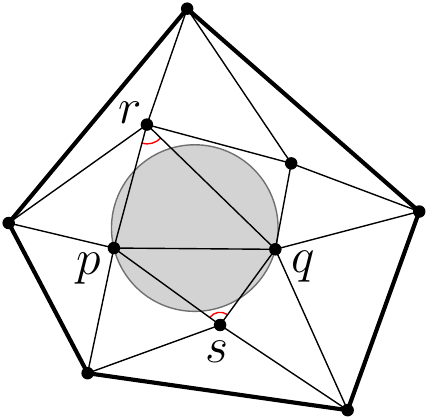}}
	&\multicolumn{1}{m{.52\columnwidth}}{\centering\includegraphics[width=.51\columnwidth]{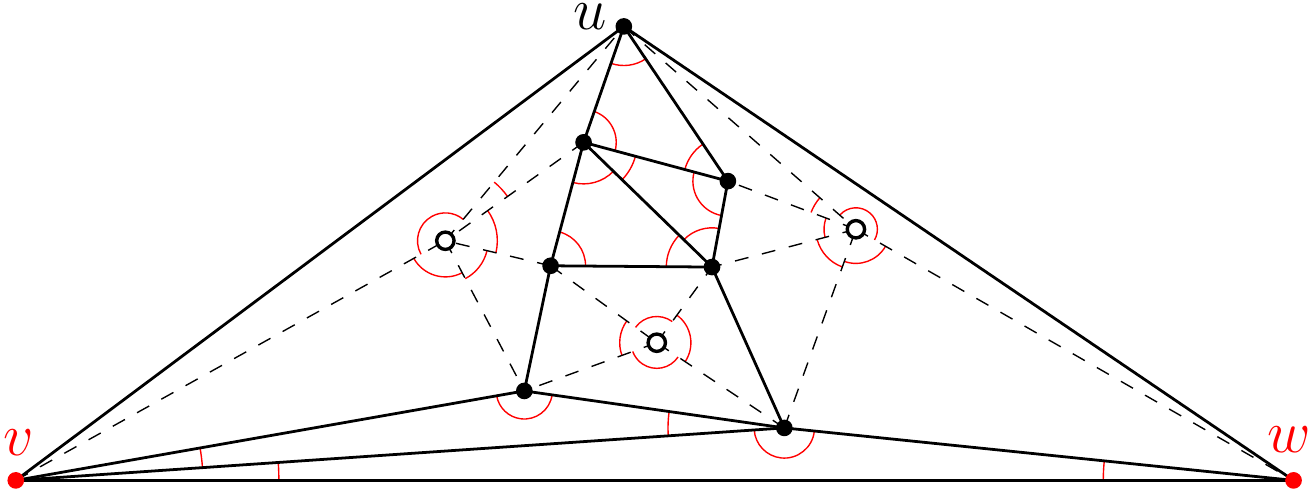}}&\multicolumn{1}{m{.24\columnwidth}}{\centering\includegraphics[width=.22\columnwidth]{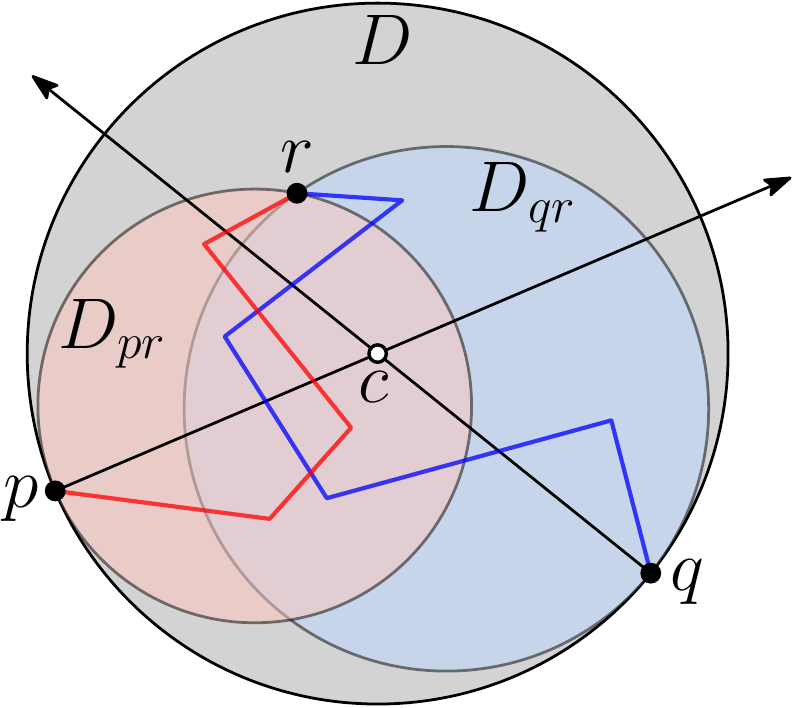}}\\
	(a)&(b)&(c)
	\end{tabular}$
	\caption{(a) The Delaunay triangulation $T$; bold segments are boundary edges. (b) The Delaunay triangulation $\mathcal{T}$; white vertices belong to $I$, solid edges belong to $\mathcal{T}[\mathcal{S}]$, and marked angles are distinguished angles. (c) Illustration of the proof of Theorem~\ref{Delaunay-thr}.}
	\label{DT-fig}
\end{figure}

\section{A combinatorial and a structural property}
\label{property-section}
A direct implication of Theorem~\ref{Dillencourt-thr} gives the upper bound $\lfloor(|T|+1)/2\rfloor$ on the size of any independent set of $T$; see e.g. \cite{Aichholzer2013}. We present a different self-contained proof for a slightly better bound.   

\begin{theorem}
\label{independent-set-thr}
Let $T$ be the Delaunay triangulation of a set of points in the plane in general position, and let $I$ be an independent set of $T$. Then $|I|\leqslant \lfloor|T|/2\rfloor$, and this bound is tight. 
\end{theorem}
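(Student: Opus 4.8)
The bound $|I|\le\lfloor|T|/2\rfloor$ is equivalent to $|I|\le|S|$, where $S:=V(T)\setminus I$, since $|S|=|T|-|I|$ and $|I|$ is an integer; so the plan is to produce an injection from $I$ into $S$. Two facts from independence drive everything. First, no face of $T$ can have two vertices in $I$ (they would be adjacent), so every triangle incident to a vertex $v\in I$ has the form $vab$ with $a,b\in S$, and the neighbours of $v$ lie entirely in $S$. Second, consequently each $v\in I$ lies strictly inside a face of the induced plane graph $T[S]$ whose boundary is exactly the link cycle (or link path, on the hull) of $v$; because that link polygon is triangulated by $v$ and contains no point of $P$ other than $v$, distinct vertices of $I$ occupy distinct faces of $T[S]$. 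This localizes the $I$-vertices, but by itself only bounds $|I|$ by the number of faces of $T[S]$, which is far too weak; the geometry must be used to upgrade ``distinct faces'' to ``distinct $S$-vertices.''

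The engine of the argument is a charging scheme by \emph{distinguished angles}. For each $v\in I$ I will select one incident triangle and mark the angle it subtends at one of its two $S$-vertices, and then define $\phi\colon I\to S$ by sending $v$ to that chosen $S$-vertex. A natural canonical choice is an extremal one, for instance the triangle of largest apex angle at $v$, or the lowest vertex of the link polygon of $v$; the marked angle is then a substantial fraction of the total angle available at $\phi(v)$. The whole point is to arrange the selection so that $\phi$ is injective, and this is where the empty-circle property enters rather than mere planarity.

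The non-collision property is the main obstacle. Suppose two vertices $v,w\in I$ were charged to the same $u\in S$. Choosing the rule so that the two distinguished triangles share the $S$-edge opposite $u$, the vertex $u$ becomes the common endpoint of two triangles $uab$ and $ua'b'$ glued along an interior edge of $T$, with $v$ and $w$ sitting as apexes on opposite sides. This is precisely the configuration governed by the Delaunay inequality \eqref{eq0}: the two opposite angles sum to less than $180$, which I will exploit to show that the two distinguished angles at $u$ cannot both exceed half of the angular budget at $u$, contradicting that both were selected as extremal. Carrying this collision analysis through carefully—and in particular verifying that the extremal selection always lands in the shared-edge situation when a collision occurs—is the crux of the proof. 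Hull vertices of $I$ need separate treatment, since they see total angle less than $180$ and have link paths rather than cycles; here I will use convexity of the boundary of $T$ to guarantee that the distinguished $S$-vertex still exists and is not reused.

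Finally, tightness. The inequality is sharp, and the prototype is the Delaunay quadrilateral on $|T|=4$ vertices, whose two non-diagonal-endpoint vertices form an independent set of size $2=\lfloor 4/2\rfloor$; this extends to an explicit $n$-point convex (maximal outerplanar) Delaunay triangulation realizing an independent set of size $\lfloor|T|/2\rfloor$, verified by a direct count. (For even $|T|$ the inequality also follows immediately from the existence of a perfect matching, Theorem~\ref{matching-thr}, but that route is circular in the present development and does not cover the odd case, so I pursue the self-contained angle argument above.)
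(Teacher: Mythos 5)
Your reduction of the theorem to $|I|\le|S|$ is correct, and you have correctly identified the two ingredients one needs (planarity of the graph induced by $S$, and the Delaunay angle inequality \eqref{eq0}), but the proof itself is missing: the map $\phi$ is never actually defined (you float two candidate rules), and the injectivity argument---which you yourself call the crux---is only promised, not carried out. Moreover, the sketch of that argument cannot be carried out as described, for concrete reasons. First, Inequality \eqref{eq0} bounds the sum of the two angles \emph{opposite} a shared interior edge, i.e.\ the angles at the two apexes on either side of that edge; it never bounds the sum of two angles located at a \emph{common} vertex $u$, since $u$ cannot be the apex on both sides of one edge. So even if the distinguished triangles of two colliding vertices $v,w\in I$ did share an edge (which they need not: around an interior vertex $u$ of high degree, many triangles can separate them), \eqref{eq0} would constrain the angles at $v$ and $w$, not the two charged angles at $u$. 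Second, the extremal rule you suggest works against you: a triangle with the largest apex angle at $v$ has correspondingly \emph{small} angles at its other two vertices, so the charged angle at $\phi(v)$ is forced to be small, undermining the claim that it is ``a substantial fraction of the total angle available at $\phi(v)$.'' Third, vertices of $I$ on the hull of $T$, whose link is a path and which lie on the outer face of the induced graph, are explicitly deferred and never handled.

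The obstacle you ran into---turning a local angle charge into a globally injective assignment---is exactly what the paper's proof is designed to avoid: it uses the same ingredients but aggregates instead of injecting. It first adds two far-away points so that the enlarged Delaunay triangulation $\mathcal{T}$ contains $T$ and every vertex of $I$ is interior to $\mathcal{T}$ (disposing of the hull case once and for all). Then, writing $\mathcal{S}=S\cup\{v,w\}$, it assigns to each of the $e$ edges of $\mathcal{T}[\mathcal{S}]$ the at most two opposite angles of $\mathcal{T}$, and computes the total measure $d$ of assigned angles in two ways: summing over faces, $d=180\cdot g+360\cdot b$, where $b=|I|$ is the number of faces containing a vertex of $I$; summing over edges, $d<180\cdot e=180\cdot(|\mathcal{S}|+g+b-1)$ by \eqref{eq0} and Euler's formula. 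Comparing gives $b\le|\mathcal{S}|-2$, i.e.\ $|I|\le|S|$, with no injection, no collision analysis, and no case distinctions. If you want to keep your charging framework, some such global accounting (or a Hall-type condition verified for all subsets of $I$) is unavoidable; note that in the tight example of Figure~\ref{sufficiency-fig}(b) with $|T|$ even, any valid $\phi$ must be a bijection between $I$ and $S$, so there is no slack for a greedy local rule to waste. Your tightness construction itself is fine and is essentially the paper's.
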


\begin{proof}
This upper bound is tight as any maximum independent set in the $n$-vertex Delaunay triangulation of Figure~\ref{sufficiency-fig}(b) has exactly $\lfloor n/2\rfloor$ vertices (regardless of the parity of $n$).
	
Now we prove the upper bound. Set $S:=V(T)\setminus I$, and let $u$ be a vertex of $S$ that is on the boundary of $T$ (observe that such a vertex exists). Let $v,w\notin V(T)$ be two points in the plane such that (i) $T$ lies in the triangle $(u,v,w)$ and (ii) neither of $v$ and $w$ lies in the disks that introduce edges of $T$; see Figure~\ref{DT-fig}(b). Let $\mathcal{T}$ be the Delaunay triangulation of $V(T)\cup \{v,w\}$. Our choice of $v$ and $w$ ensures that any edge of $T$ is also an edge of $\mathcal{T}$, and thus $T\subset \mathcal{T}$. 
Set $\mathcal{S}:=S\cup \{v,w\}$.
In the rest of the proof we show that $|I|\leqslant |\mathcal{S}|-2$. This implies that $|I|\leqslant |S|$ (because $|\mathcal{S}|=|S|+2$) which in turn implies that $|I|\leqslant \lfloor|T|/2\rfloor$ (because $|T|=|S|+|I|$, and $|I|$ and $|T|$ are integers).	
	
To show that $|I|\leqslant |\mathcal{S}|-2$ we use a counting argument similar to that of \cite[Lemma 3.8]{Dillencourt1990}. Let $\mathcal{T}[\mathcal{S}]$ be the subgraph of $\mathcal{T}$ that is induced by $\mathcal{S}$. In other words, $\mathcal{T}[\mathcal{S}]$ is the resulting graph after removing vertices of $I$ and their incident edges from $\mathcal{T}$. 
 Since $\mathcal{T}$ is a triangulation and $I$ does not contain boundary vertices of $\mathcal{T}$, the removal of every vertex of $I$ creates a hole (a new face which is the union of original faces) whose boundary is a simple polygon. All edges of this polygon belong to $\mathcal{T}[\mathcal{S}]$ because $I$ is an independent set. Therefore, $\mathcal{T}[\mathcal{S}]$ is a connected plane graph, the boundaries of its interior faces are simple polygons, and the boundary of its outer face is the triangle $(u,v,w)$; see Figure~\ref{DT-fig}(b). Each interior face of $\mathcal{T}[\mathcal{S}]$ contains either no point of $I$ or exactly one point of $I$. Interior faces that do not contain any point of $I$ are called {\em good faces}, and interior faces that contain a point of $I$ are called {\em bad faces}. Each good face is a triangle. Let $g$ and $b$ denote the number of good and bad faces, respectively. Thus the number of interior faces is $g+b$. 
 
 Since $|I|=b$, it suffices to show that $b\leqslant |\mathcal{S}|-2$. To do so, we assign to each edge $(p,q)\in \mathcal{T}[\mathcal{S}]$ certain {\em distinguished angles}. If $(p,q)$ is an interior edge then we distinguish the two angles of $\mathcal{T}$ that are opposite to $(p,q)$, and if $(p,q)$ is a boundary edge then we distinguish the unique angle of $\mathcal{T}$ that is opposite to $(p,q)$, as in Figure~\ref{DT-fig}(b). Let $d$ be the total measure of all distinguished angles. We compute $d$ in two different ways: once with respect to the number of faces of $\mathcal{T}[\mathcal{S}]$ and once with respect to the number of edges of $\mathcal{T}[\mathcal{S}]$. Each good face contains three distinguished angles, their sum is $180^\circ$.
 The sum of the distinguished angles in each bad face is $360^\circ$ because these angles are anchored at the removed vertex in the face. Therefore 
 \begin{equation}
 \label{eq3}
 d=180\cdot g+360\cdot b.
 \end{equation}

Now we compute $d$ with respect to the number of edges of $\mathcal{T}[\mathcal{S}]$ which we denote by $e$. By Euler's formula, we have $e=|\mathcal{S}|+b+g-1$. By Inequality~\eqref{eq0}, the sum of (at most two) distinguished angles assigned to each edge is less than $180^\circ$. Therefore 
\begin{equation}
\label{eq2}
d< 180\cdot e= 180\cdot (|\mathcal{S}|+ b + g -1).
\end{equation} 

Combining \eqref{eq3} and \eqref{eq2}, we have
$$180\cdot g + 360\cdot b < 180\cdot (|\mathcal{S}|+ b + g -1),$$
which simplifies to $b< |\mathcal{S}|-1$. Since $b$ and $|\mathcal{S}|$ are integers, $b\leqslant |\mathcal{S}|-2$. 
\end{proof}

Our proof of Theorem~\ref{Dillencourt-thr} employs Theorem~\ref{independent-set-thr} and the following structural property of Delaunay triangulations presented by the author \cite{Biniaz2019}. For the sake of completeness we repeat its proof.

\begin{theorem}
	\label{Delaunay-thr}
		Let $T$ be the Delaunay triangulation of a set of points in the plane
		in general position. Let $p$ and $q$ be two vertices of $T$ and let $D$ be any closed disk that has on its boundary only vertices $p$ and $q$. Then there exists a path, between $p$ and $q$ in $T$, that lies in $D$.
\end{theorem}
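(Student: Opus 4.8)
The plan is to prove, by induction on the number of points of $P$ lying strictly inside the disk, a slightly stronger statement in which $p$ and $q$ are only required to lie on the boundary circle of $D$, while other points of $P$ are allowed on $\partial D$ or inside $D$; the theorem is the special case where $\partial D$ meets $P$ in exactly $\{p,q\}$. Strengthening in this way is what makes the induction self-contained, because the sub-disks produced by the recursion cannot be forced to carry only two points of $P$ on their boundaries. For the base case, suppose no point of $P$ lies strictly inside $D$. Since $P$ is in general position, no four of its points are concyclic, so $\partial D$ carries at most three points of $P$. If it carries only $p$ and $q$, then $D$ witnesses that $pq\in T$ and the single edge $pq\subseteq D$ is the desired path; if it carries a third point $s$, then $D$ is the empty circumdisk of the triangle $pqs$, so $pqs$ is a Delaunay face and again $pq\in T$.

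For the inductive step I would first extract one Delaunay edge incident to $p$ that points into $D$. Consider the homothety centered at $p$ that shrinks $D$ toward $p$: for each ratio $t\in(0,1]$ the image $tD$ is a disk contained in $D$ (by convexity, since $p\in D$) and internally tangent to $\partial D$ at $p$, and as $t$ runs from $0$ to $1$ these disks sweep out the whole interior of $D$. Let $t^\ast$ be the smallest ratio for which $\partial(t^\ast D)$ passes through a point $r\in P\setminus\{p\}$; such an $r$ exists and lies strictly inside $D$, because the other points of $\partial D$ are reached only in the limit $t\to 1$. By minimality the interior of $D_1:=t^\ast D$ contains no point of $P$, so $D_1$ (after an arbitrarily small perturbation removing any incidental third concyclic point from its boundary, permitted by general position) witnesses that $pr\in T$, and the chord $pr$ lies in $D_1\subseteq D$.

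It then remains to connect $r$ to $q$ inside $D$ through strictly fewer interior points. Let $D_2$ be the homothet of $D$ centered at $q$ whose boundary passes through $r$; as before $D_2\subseteq D$, and $q,r\in\partial D_2$. Since $D_2\subseteq D$ while the point $r$, which is interior to $D$, now lies on $\partial D_2$, the disk $D_2$ contains strictly fewer points of $P$ in its interior than $D$ does. Applying the induction hypothesis to $D_2$ with the boundary pair $r,q$ yields a path from $r$ to $q$ in $T$ lying in $D_2\subseteq D$; prepending the edge $pr$ produces the required path from $p$ to $q$ inside $D$.

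The main obstacle I anticipate is the geometric bookkeeping of this homothety family rather than any combinatorial difficulty: one must verify carefully that the shrunken disks stay inside $D$, are internally tangent at the centre of homothety, and sweep each interior point exactly once, so that the ``first hit'' $r$ is well defined, lies strictly inside $D$, and certifies an empty disk and hence a genuine Delaunay edge $pr$. Once these facts are in place, the observation that each recursive sub-disk loses at least the point $r$ from its interior guarantees that the induction terminates, closing the argument.
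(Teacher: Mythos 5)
Your proof is correct, and it runs on the same engine as the paper's: induction on the number of points of $P$ inside $D$, driven by the one-parameter family of disks internally tangent to $D$ at a boundary point (the paper describes your homothety as fixing $D$ at $p$ and shrinking it along the ray from $p$ through the center). The organization of the recursion differs, though. The paper picks an \emph{arbitrary} interior point $r$, shrinks from $p$ and from $q$ until $r$ lies on the respective boundaries, and recurses on \emph{both} sub-disks $D_{pr}$ and $D_{qr}$; you shrink from $p$ only until the \emph{first} point $r$ is hit, so the resulting disk is empty and certifies the Delaunay edge $pr$ outright, and then you make a single recursive call on the $q$-side disk through $r$. Your one-sided recursion thus extracts the path edge by edge and avoids the ``union of two paths'' step. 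More substantively, your strengthened induction hypothesis (allowing extra points of $P$ on the boundary of the disk) together with the perturbation remark repairs a degeneracy that the paper's write-up passes over silently: general position forbids four concyclic points but not three, so the intermediate disks $D_{pr}$ and $D_{qr}$ may carry a third point of $P$ on their boundaries, in which case the theorem as stated (exactly two boundary vertices) cannot be invoked verbatim on them. Your version is slightly longer but airtight on this point. One cosmetic remark: the concatenation of the edge $pr$ with the $r$--$q$ path is a priori only a walk, since that path could pass through $p$; as the paper does for its union of two paths, you should say the walk \emph{contains} the required $p$--$q$ path inside $D$.
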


\begin{proof}
	The proof is by induction on the number of vertices in $D$. If there is no vertex of $V(T)\setminus\{p,q\}$ in the interior of $D$, then $(p,q)$ is an edge of $T$, and so is a desired path. Assume that there exists a vertex $r\in V(T)\setminus\{p,q\}$ in the interior of $D$. Let $c$ be the center of $D$. Consider the ray $\ray{pc}$ emanating from $p$ and passing through $c$. Fix $D$ at $p$ and then shrink it along $\ray{pc}$ until $r$ lies on its boundary; see Figure \ref{DT-fig}(c). Denote the resulting disk $D_{pr}$, and notice that it lies fully in $D$. Compute the disk $D_{qr}$ in a similar fashion by shrinking $D$ along $\ray{qc}$. The disk $D_{pr}$ does not contain $q$ and the disk $D_{qr}$ does not contain $p$. By induction hypothesis there exists a path, between $p$ and $r$ in $T$, that lies in $D_{pr}$, and similarly there exists a path, between $q$ and $r$ in $T$, that lies in $D_{qr}$. The union of these two paths contains a path, between $p$ and $q$ in $T$, that lies in $D$. 
\end{proof}

\section{Proof of Theorem~\ref{Dillencourt-thr}}
\label{proof-section}

Recall $T$ and $S$. Pick an arbitrary representative vertex from each component of $T\setminus S$, and let $C$ be the set of these vertices. The number of components is $|C|$. Consider the Delaunay triangulation $T'$ of $S\cup C$. Observe that $C$ is an independent set of $T$. We prove by contradiction that $C$ is also an independent set of $T'$. Assume that there exists an edge $(c_1,c_2)\in T'$ such that $c_1,c_2\in C$. Since $T'$ is a Delaunay triangulation, by definition there exists a closed disk $D$ that has only $c_1$ and $c_2$ on its boundary and does not contain any other point of $S\cup C$. Now consider $T$ and $D$. By Theorem~\ref{Delaunay-thr} there exists a path between $c_1$ and $c_2$ in $T$, that lies in $D$. Since $D$ does not contain any point of $S$, all edges of this path belong to $T\setminus S$. This contradicts the fact that $c_1$ and $c_2$ belong to different components of $T\setminus S$. Therefore $C$ is an independent set of $T'$. 
By Theorem~\ref{independent-set-thr}, we have $|C|\leqslant |T'|/2$. This and the fact that $|T'|=|S|+|C|$ imply that $|C|\leqslant |S|$.   

\section{Blocking Delaunay triangulations} 
In this section, we use Theorem~\ref{independent-set-thr} and prove the conjecture of Aichholzer~\etal~\cite{Aichholzer2013} that at least $n$ points are required to block any $n$-vertex Delaunay triangulation.
Let $P$ be a set of points in the plane and let $T$ be the Delaunay triangulation of $P$. A point set $B$ {\em blocks} or {\em stabs} $T$ if in the Delaunay triangulation of $P\cup B$ there is no edge between two points of $P$. In other words, every disk that introduces an edge in $T$ contains a point of $B$. Throughout this section we assume that $P\cup B$ is in general position. 

In 2010, Aronov~\etal~\cite{Aronov2011} showed that $2n$ points are sufficient to block any $n$-vertex Delaunay triangulation, and if the vertices are in convex position then $4n/3$ points suffice. These bounds have been improved by Aichholzer~\etal~\cite{Aichholzer2013} (2010) to $3n/2$ and $5n/4$, respectively.

For the lower bound, Aronov~\etal~\cite{Aronov2011} showed the existence of $n$-vertex Delaunay triangulations that require $n$ points to be blocked, for example see Figure~\ref{sufficiency-fig}(a) in which every disk (representing a Delaunay edge) requires a unique point to be blocked as the disks are interior disjoint. Aichholzer~\etal~\cite{Aichholzer2013} proved that at least $n-1$ points are necessary to block any $n$-vertex Delaunay triangulations, and stated the following conjecture.

\begin{conjecture}
	\label{Aichholzer-conj}
	For any point set $P$ in the plane in convex position, $|P|$ points are necessary and sufficient to block the Delaunay triangulation of $P$.
\end{conjecture}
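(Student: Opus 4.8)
The plan is to treat the two directions of the conjecture separately, since they are of completely different character. The necessity direction is the crux of what is claimed, and pleasantly it follows in a single step from Theorem~\ref{independent-set-thr}, holding for \emph{arbitrary} point sets and not merely convex ones. Suppose a set $B$ blocks $T$, the Delaunay triangulation of $P$; because $P\cup B$ is in general position, $B$ is disjoint from $P$. By the definition of blocking, the Delaunay triangulation $\mathcal{T}$ of $P\cup B$ has no edge between two points of $P$, i.e.\ $P$ is an independent set of $\mathcal{T}$. Since $|\mathcal{T}|=|P|+|B|$, applying Theorem~\ref{independent-set-thr} to the independent set $P$ of $\mathcal{T}$ gives $|P|\le\lfloor(|P|+|B|)/2\rfloor\le(|P|+|B|)/2$, which rearranges to $|P|\le|B|$. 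Hence any blocking set has size at least $|P|=n$. This is exactly the ``necessary'' half of the conjecture, and it simultaneously sharpens the $n-1$ lower bound of Aichholzer~\etal\ to $n$ for every Delaunay triangulation.

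For the sufficiency direction I would construct, for a set $P=\{p_1,\dots,p_n\}$ in convex position, an explicit blocking set of exactly $n$ points. Here it is useful to recall that $T$ has precisely $2n-3$ edges, namely the $n$ hull edges together with $n-3$ diagonals, and that a single point destroys an edge $(p,q)$ of $T$ exactly when it lies in the common intersection (a lens-shaped region) of all empty disks through $p$ and $q$. The strategy is therefore to place blocking points so that each one falls in the overlap of several such lenses, charging each placed point to a distinct vertex of $P$ so that the total never exceeds $n$, and then to certify that every one of the $2n-3$ lenses receives at least one point. Convexity is what should make such an economical placement possible, since the lenses of edges that share a vertex or a triangle then line up favorably.

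The main obstacle is precisely this sufficiency direction. Necessity is a one-line consequence of Theorem~\ref{independent-set-thr}, whereas sufficiency is genuinely geometric: one must commit to an explicit placement rule for the $n$ points, verify that after their insertion the disk witnessing each original edge now contains a blocking point, and rule out any empty disk through two vertices of $P$ surviving in $\mathcal{T}$. Controlling all $2n-3$ disks while holding the budget at exactly $n$ is the delicate part, and it is here, not in the lower bound, that the convex-position hypothesis is indispensable and where essentially all of the work of the proof would reside.
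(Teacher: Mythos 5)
Your necessity argument is correct and is exactly the paper's argument: the paper proves it as Theorem~\ref{blocking-thr} by observing that $P$ is an independent set of the Delaunay triangulation $T$ of $P\cup B$, applying Theorem~\ref{independent-set-thr} to get $|P|\leqslant\lfloor|T|/2\rfloor$, and rearranging with $|T|=|P|+|B|$ to conclude $|B|\geqslant|P|$. Like you, the paper notes this holds for $P$ in general position, not only convex position, so on this half you and the paper coincide completely.

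The genuine gap is the sufficiency half, which you outline as a plan but never execute: you commit to no placement rule, verify no disk is left uncovered, and explicitly defer ``essentially all of the work.'' A sketch of the form ``place points so each falls in several lenses and charge them to vertices'' is not a proof; in particular, an edge of $T$ is destroyed only if \emph{every} $P$-empty disk through its endpoints receives a blocking point, and nothing in your outline certifies this for all $2n-3$ edges simultaneously within a budget of $n$. You should be aware, however, that the paper does not close this gap either: it states explicitly that the sufficiency direction of Conjecture~\ref{Aichholzer-conj} remains open. What the paper does prove (inside Theorem~\ref{blocking-thr}) is only \emph{tightness} of the lower bound, i.e., that there exists \emph{one particular} $n$-point configuration in convex position (Figure~\ref{sufficiency-fig}(b), a fan where one vertex sees all others) that can be blocked by $n$ points placed just outside the hull. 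That is much weaker than sufficiency for \emph{every} convex $P$, which is what the conjecture demands. So your proposal proves exactly as much of the conjecture as the paper does; the honest conclusion for both is that the statement, as a biconditional over all convex position sets, is still a conjecture, and any review of your write-up must flag the sufficiency direction as unproven rather than merely ``delicate.''
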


An implication of Theorem~\ref{independent-set-thr} proves the necessity of $|P|$ blocking points in Conjecture~\ref{Aichholzer-conj} (even if $P$ is in general position); the sufficiency remains open.

\begin{figure}[htb]
	\centering
	\setlength{\tabcolsep}{0in}
	$\begin{tabular}{cc}
	\multicolumn{1}{m{.5\columnwidth}}{\centering\includegraphics[width=.25\columnwidth]{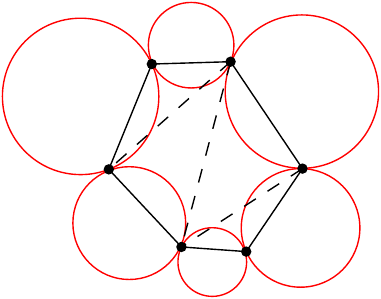}}
	&\multicolumn{1}{m{.5\columnwidth}}{\centering\includegraphics[width=.27\columnwidth]{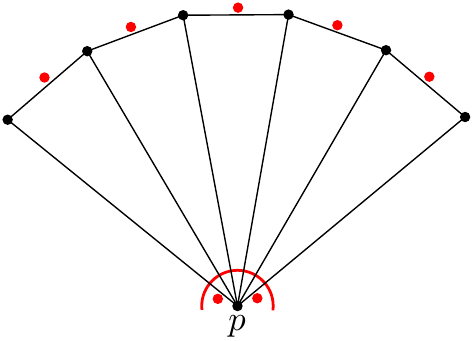}}\\
	(a)&(b) 
	\end{tabular}$
	\caption{(a) At least $n$ points are required to block this $n$-vertex Delaunay triangulation. (b) This $n$-vertex Delaunay triangulation can be blocked by $n$ points.}
	\label{sufficiency-fig}
\end{figure}

\begin{theorem}
	\label{blocking-thr}
	Let $P\cup B$ be any set of points in the plane in general position such that $B$ blocks the Delaunay triangulation of $P$. Then $|B|\geqslant |P|$, and this bound is tight. 
\end{theorem}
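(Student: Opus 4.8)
The plan is to reduce the lower bound directly to Theorem~\ref{independent-set-thr}, with essentially no further work. The key observation is that the blocking condition is nothing more than an independence condition in disguise. By definition, $B$ blocks the Delaunay triangulation of $P$ precisely when the Delaunay triangulation $T^*$ of $P\cup B$ has no edge between two points of $P$; that is, precisely when $P$ is an independent set of $T^*$. Since $P\cup B$ is assumed to be in general position, $T^*$ is a genuine Delaunay triangulation of a point set in general position, so Theorem~\ref{independent-set-thr} applies to it verbatim.

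First I would record this reinterpretation carefully, checking that the stated meaning of ``blocks'' (no Delaunay edge of $P\cup B$ joins two points of $P$) coincides exactly with the assertion that $P$ is independent in $T^*$. Then, applying Theorem~\ref{independent-set-thr} to the independent set $P$ of the triangulation $T^*$ gives $|P|\leqslant\lfloor|T^*|/2\rfloor$. Since $|T^*|=|P|+|B|$, this yields $|P|\leqslant(|P|+|B|)/2$, which rearranges to $|P|\leqslant|B|$, as required. None of these steps is delicate: the entire content of the lower bound is the translation of blocking into independence, after which Theorem~\ref{independent-set-thr} does all the work.

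For tightness, I would exhibit a single family of $n$-vertex examples in which $n$ blocking points already suffice, namely the configuration of Figure~\ref{sufficiency-fig}(b), and verify that $n$ suitably placed points stab every disk that introduces a Delaunay edge of $P$; combined with the just-proved inequality $|B|\geqslant|P|$ this forces equality $|B|=|P|$ there. The main obstacle here is conceptual rather than computational: one must notice that the extremal configuration for the independent-set bound of Theorem~\ref{independent-set-thr} is exactly the configuration that is blockable by $|P|$ points, so the two extremal phenomena are governed by the same inequality. Once the blocking-to-independence correspondence is in hand, both the bound $|B|\geqslant|P|$ and its tightness follow immediately, and there is no heavy calculation to carry out.
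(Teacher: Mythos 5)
Your proposal is correct and follows essentially the same route as the paper: translate the blocking condition into $P$ being an independent set of the Delaunay triangulation of $P\cup B$, apply Theorem~\ref{independent-set-thr} to get $|P|\leqslant\lfloor|T|/2\rfloor$ and hence $|B|\geqslant|P|$, and witness tightness with the configuration of Figure~\ref{sufficiency-fig}(b). The only difference is cosmetic (the paper phrases independence via the components left after removing $B$, and spells out the explicit placement of the $n$ blocking points, two near $p$ and $n-2$ near the hull edges not incident to $p$), so no gap remains to address.
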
 

\begin{proof}Consider the Delaunay triangulation $T$ of $P\cup B$. Since $B$ blocks the Delaunay triangulation of $P$, the removal of $B$ from $T$ leaves exactly $|P|$ components each consisting of a single point of $P$. Thus $P$ is an independent set of $T$. By Theorem~\ref{independent-set-thr}, we have $|P|\leqslant \lfloor|T|/2\rfloor\leqslant |T|/2$ which implies that $|B|\geqslant |P|$ (because $|T|=|P|+|B|$). 
	
To verify the tightness of this bound, consider a set of $n$ points in convex position where $n-1$ points are at distances approximately 1 from one point, say $p$, so that no four points lie on a circle. In the Delaunay triangulation of this point set, $p$ is connected to all other points, as depicted in Figure~\ref{sufficiency-fig}(b). This Delaunay triangulation can be blocked by $n$ points that are placed outside the convex hull: two points are placed very close to $p$ and $n-2$ points are placed very close to the $n-2$ convex hull edges that are not incident to $p$. A similar placement has also been used in \cite{Aichholzer2013} and \cite{Aronov2011}.
\end{proof}
%\paragraph{Acknowledgement.}

\bibliographystyle{abbrv}
\bibliography{Toughness-DT}
\end{document}